\newtheorem{remark}{Remark}
\newtheorem{example}{Example}[section]
\newtheorem{theorem}{Theorem}[section]
\newtheorem{definition}{Definition}[section]
\newtheorem{mechanism}{Mechanism}
\newtheorem{hypo}{Condition}
\renewenvironment{proof}{\noindent {\bf Proof}\quad}{\qed}
\newcommand{\lbg}{\lambda}
\newcommand{\sa}{\mathcal{S}}
\newcommand{\R}{\mathbb{R}}
\newcommand{\M}{{\mathbb{R}^m}}
\newcommand{\dimension}{m}
\newcommand{\Mr}{\mathbb{M}_r}
\newcommand{\U}{{U^q}}
\newcommand{\dimU}{q}
\newcommand{\Ud}{[0,1]^\dimU}
\newcommand{\Ur}{U_r}
\newcommand{\D}{\mathcal{D}}
\newcommand{\Pb}{P}
\newcommand{\diam}{\diameter}
\newcommand{\nmax}{dn_{\text{max}}}
\newcommand{\extr}{\infty}
\newcommand{\ratio}{R}
\newcommand{\arrondi}{L}
\newcommand{\AND}{\mathrel{\wedge}}
\title{Preserving differential privacy under finite-precision semantics 
\thanks{This work has been partially supported by the project
        ANR-09-BLAN-0345-02  CPP and by the INRIA Action d'Envergure CAPPRIS.}}
\author{Ivan Gazeau, Dale Miller, and Catuscia Palamidessi \\
INRIA and LIX, Ecole Polytechnique}
\begin{document}
\maketitle

\begin{abstract}
The approximation introduced by finite-precision
representation of continuous data can induce arbitrarily large
information leaks even when the computation using exact semantics is
secure. Such leakage can thus undermine design efforts aimed
at protecting sensitive information.  We focus here on differential
privacy, an approach to privacy that emerged from the area of
statistical databases and is now widely applied also in other domains.
In this approach, privacy is protected by the addition of noise to a
true (private) value.  To date, this approach to privacy has been
proved correct only in the ideal case in which computations are made
using an idealized, infinite-precision semantics.  In this paper, we
analyze the situation at the implementation level, where the semantics
is necessarily finite-precision, i.e. the representation of real
numbers and the operations on them, are rounded according to some
level of precision. We show that in general there are violations of
the differential privacy property, and we study the conditions under
which we can still guarantee a limited (but, arguably, totally
acceptable) variant of the property, under only a minor degradation of
the privacy level. Finally, we illustrate our results on two cases of
noise-generating distributions: the standard Laplacian mechanism
commonly used in differential privacy, and a bivariate version of the
Laplacian recently introduced in the setting of privacy-aware
geolocation.
%
%

\medskip\noindent {\bf Keywords: }
Differential privacy,
floating-point arithmetic,
robustness to errors.
\end{abstract}

\section{Introduction}

It is well known that, due to the physical limitations of actual machines, 
in particular the finiteness of their memory, real numbers and their 
operations cannot be implemented with full precision.  
While for traditional computation getting an approximate result is
not critical when a bound on the error is known, we argue that, 
in  security applications, the approximation 
error can became a fingerprint potentially causing the disclosure of secrets. 

Obviously, the standard techniques to measure the security breach do not apply, 
because an analysis of the system in the \emph{ ideal } (aka \emph{exact}) semantics
does not reveal the information leaks caused by the implementation. 
Consider, for instance, the following simple program
\[ \mbox{if} \;\; f(h) > 0\;\; \mbox{then}\;\; \ell = 0 \;\; \mbox{else} \;\; \ell = 1\]
where $h$ is a high (i.e., confidential) variable and $\ell$ is a low (i.e., public) variable. 
Assume that $h$ can take two values, $v_1$ and $v_2$, and that both $f(v_1)$ and $f(v_2)$ 
are strictly positive. Then, in the ideal semantics, the program is perfectly secure, 
i.e. it does not leak any information. 
However, in the implementation, it could be the case that the test succeeds in the case of 
$v_1$ but not in the case of $v_2$ because, for instance, the value of 
$f(v_2)$ is below  the smallest representable positive number. 
Hence, we would have a  total disclosure of the secret value. 

The example above is elementary but it should give an idea of the 
pervasive nature of the problem, which can have an impact in any confidentiality setting, 
and should therefore receive attention by those researchers interested in (quantitative) information flow.  
In this paper, we initiate this investigation with an in-depth study of the particular case of 
\emph{differential privacy}.

Differential privacy \cite{Dwork:06:ICALP,Dwork:06:TCC} is an approach
to the protection of private information that originated in the field
of statistical databases and is now investigated in many other
domains, ranging from programming languages
\cite{Barthe:12:POPL,Gaboardi:13:POPL} to social networks
\cite{Narayanan:09:SSP} and geolocation
\cite{Machanavajjhala:08:ICDE,Ho:11:GIS,Mig12,geoPets13}.
The key idea behind differential privacy is that whenever someone queries a dataset, the 
reported  answer should not allow him to distinguish  
whether a certain individual record is in the dataset or not.  
More precisely, the presence or absence of the record should not 
change significantly the probability of obtaining a given answer. 
The standard way of achieving such a property is by using an 
\emph{oblivious mechanism}\footnote{The name ``oblivious'' comes from the fact that 
the final answer depends only on the answer to the query and not on the dataset.} 
which consists in adding some noise to the true answer. 
Now the point is that, even if such a mechanism is proved to provide the desired 
property in the ideal semantics, its  implementation may induce errors
that  alter the least significant digits of the reported answer and 
cause significant privacy breaches. Let us illustrate the  problem with an example.
\begin{example}\label{exa:holes in the Laplacian}
{\rm 
Consider the simplest representation of reals: the fixed-point numbers. 
This representation is used on low-cost processors which do not have floating-point arithmetic module.
Each value is stored in a memory cell of fixed length. In such cells, the last $d$ digits represent the fractional part.
Thus, if the value (interpreted as an integer) stored in the cell is $z$, its semantics (i.e., the true real number being represented) is $z\cdot2^{-d}$.
 
To grant differential-privacy, the standard technique consists of returning a random value with probability
$p(x)=\nicefrac{1}{2b}\cdot e^{-\nicefrac{|x-r|}{b}}$ 
where $r$ is the true result and $b$ is a scale parameter which depends on the degree of privacy to be obtained and on the sensitivity of the query.
To get a random variable with any specific distribution, in general, we need to start with an initial random variable 
provided by a primitive of the machine with a given distribution. 
To simplify the example, we assume that the machine already provides a Laplacian random variable $X$ with a scale parameter $1$.
The probability distribution of such an $X$ is $p_X(x)= \nicefrac{1}{2} e^{-|x|}$.
Hence, if we want to generate the random variable $bX$ with probability distribution $p_{bX}(x)= \nicefrac{1}{2b}\cdot e^{-\nicefrac{|x|}{b}}$,
we can just multiply by $b$ the value $x=z\cdot 2^{-d}$ returned by the primitive.

Assume that we want to add noise with a scale parameter $b=2^n$ for some fixed integer $n$ 
($b$ can be big when the sensitivity of the query and the required privacy degree are high).
In this case, the multiplication by $2^n$ returns a number  $2^n z\cdot2^{-d}$ that, in the fixed-point representation terminates with $n$ zeroes.
Hence, when we add this noise to the true result, we return a value whose representation has the same $n$ last digits as the secret.
For example, assume $b=2^2=4$ and $d=6$.
Consider that the true answers are $r_1=0$ and $r_2=1+2^{-5}$.
In the fixed-point representation, the last two digits of $r_1$ are $00$, and the last two digits of $r_2$ are $10$. 
Hence, even after we add the noise, we can always tell whether the true value was $r_1$ or $r_2$.
Note that the same example holds for every $b=2^n$ and every pair of true values $r_1$ and $r_2$ which differ by $\nicefrac{(2^n k+h)}{2^d}$
where $k$ is any integer and $h$ is any integer between $1$ and $2^n -1$. 
Figure\ \ref{ex1} illustrates the situation for $b=4$, $k=3$ and $h=2$. 
}
\end{example}

\begin{figure}
\centering
 \newcommand*{\sensi}{0.5}
 \newcommand*{\step}{0.06}
 \newcommand*{\coef}{3}

\begin{tikzpicture}[domain=-4:4]
    \draw[->] (-5.5,0) -- (5.5,0) node[right] {$x$};
    \draw[->] (0,-0.2) -- (0,3.7) node[above] {$P(x)$};
\node at (0,-0.5){$0$};
\foreach \x in {-20,...,20}{

\path[fill=black] (\x * 4 * \step ,0 )
 rectangle (\x * 4 * \step + \step ,{round(\coef *exp(-abs((\x + 0)*\step *4))/\step)*\step});

\path[fill=green] (\x * 4 * \step + \step * 2  ,0 )
 rectangle (\x * 4 * \step + \step * 3, {round(\coef *exp(-abs((\x  - 3)*4*\step))/\step)*\step});

}
\end{tikzpicture}

\caption{The probability distribution of the reported answers after the addition of Laplacian noise for the true answer $r_1=0$ (black) and
$r_2=3\cdot2^{-4}+2^{-5}$ (green).}\label{ex1}
\end{figure}
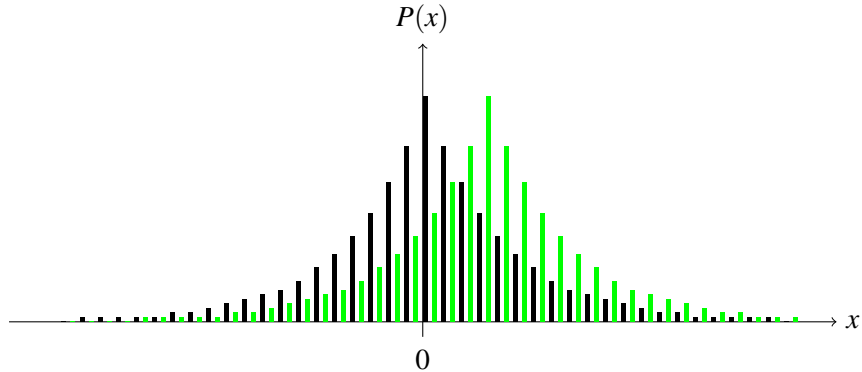

Another attack, based on the IEEE standard floating-point representation \cite{ieee08}, was presented in \cite{Mironov12}.
In contrast to \cite{Mironov12}, we have chosen an example based on the fixed point representation because  
it allows to illustrate more distinctively a problem for privacy which rises from the finite 
precision\footnote{More precisely, the problem is caused by scaling a finite set of randomly generated numbers. It is easy to prove that the 
problem raises for any implementation of numbers, although it may not raise \emph{for every point} like in the case of the fixed-point representation.}
 and which is, therefore,
pandemic. (This is not the case for the example in \cite{Mironov12}: 
fixed-point and integer-valued algorithms are immune to that
attack.)

In this paper, we propose a solution to fix the privacy breach induced
by the finite-precision implementation of a differentially-private
mechanism for any kind of implementation.  
Our main concern is to establish a bound on the degradation of  privacy induced by 
both the finite representation and  by the computational errors in the generation of the noise. 
In order to achieve this goal, we use the concept of \emph{closeness}
introduced by the authors in \cite{gazeau12hal}, which allows us to reason about the approximation errors and their accumulation. 
We make as few assumptions  as possible about the procedure for generating the noise.
In particular,  we do not assume that the noise has a linear Laplacian distribution: it can be any noise that provides differential privacy
and whose implementation satisfies a few properties (normally granted by the implementation of real numbers) which ensure its closeness. 
We illustrate our method with two examples: the classic case of the
univariate (i.e., linear) Laplacian, and  the case of the bivariate Laplacian. The latter distribution is used,
for instance, to generate noise in privacy-aware geolocation mechanisms \cite{Mig12}.   

\subsection{Related work}
As far as we know, the only other work that has considered the problem
introduced by the finite precision in the implementation of
differential privacy is \cite{Mironov12}. As already mentioned, that
paper showed an attack on the Laplacian-based
implementation of differential privacy within the IEEE standard
floating-point representation\footnote{We discovered our attack
  independently, but \cite{Mironov12} was published first.}. 
To thwart such an attack, the author of \cite{Mironov12} proposed a method
that avoids using the standard uniform random generator for floating point
(because it does not draw all representable numbers but only multiple of $2^{-53}$). 
Instead, his method generates two integers, one for the mantissa and one for the exponent 
in such a way that every representable number is drawn with its correct probability.
Then it computes the linear Laplacian using a logarithm implementation (assumed to be full-precision), 
and finally it uses a snapping mechanism consisting in truncating large values and then rounding the final result.

The novelties of our paper, w.r.t. \cite{Mironov12}, consist in the fact that 
we deal with a general kind of noise, not necessarily the linear Laplacian, 
and with any kind of  implementation of real numbers, not necessarily the IEEE floating point standard. 
Furthermore, our kind of analysis allows us to measure how safe an existing solution can be and 
what to do if the requirements needed for the safety of this solution are not met.
Finally, we consider our correct implementation of  the bivariate Laplacian also as a valuable contribution, 
given its practical usefulness for location-based applications.  

The only other work we are aware of, considering both computational error and differential  privacy, is   \cite{Chaudhuri:11:FSE}. 
However, that paper does not consider at all the problem of the loss of privacy due to implementation error:                                                                                                                                                                                                                                                                                                                                                                                                                                                                                                                                
rather, they develop a technique to establish a bound on the error, and show that this technique can also be used to 
compute the sensitivity of a query, which is a parameter of the Laplacian noise. 

\subsection{Plan of the paper}
This paper is organized as follow. In section \ref{sec:notation}, we recall some mathematical definitions and introduce some notation.
In section \ref{sec:diff-priv}, we describe the standard Laplacian-based mechanism 
that provides differential privacy in a theoretical setting.
In section \ref{sec:errors}, 
we discuss the  errors due to the implementation,
and we consider a set of assumptions which, if granted, allows us to 
establish a  bound on the irregularities  of the noise caused by the finite-precision 
implementation. 
Furthermore we propose a correction to the mechanism
based on rounding and truncating the result. 
Finally, we provide our main theorem, stating that with
our correction the implementation of the mechanism still preserves differential privacy, 
and establishing the precise  degradation of the privacy parameter. 
The next two sections propose some applications of  our result:
Section \ref{sec:one-dim} illustrates the  technique for the case of 
Laplacian noise in one dimension and section \ref{sec:two-dim} 
shows how our theorem applies to the case of the Euclidean bivariate Laplacian.
Section~\ref{sec:conclusion} concludes and discusses some future work. 

For the sake of space, the proofs are omitted. The interested reader can find them in the full version of this paper~\cite{gazeau13hal}.

\section{Preliminaries and notation}\label{sec:notation}

In this section, we recall some basic mathematical definitions and we 
introduce some notation that will be useful in the rest of the paper.
We will assume that the  the queries give answers in $\M$. 
Examples of such queries are the tuples representing, for instance, the average height, weight, and age.
Another example comes from geolocation, where the domain is $\R^2$. 

\subsection{Distances and geometrical notations}

There are several natural definitions of distance on  $\M$ \cite{rudin86book}.  For $\dimension \in \mathbb{N}$ and
$x=(x_1,\dots,x_\dimension) \in \M$, the $L_p$ norm of $x$, which we will denote by $\|x\|_p$, is defined as
$\|x\|_p=\sqrt[p]{\sum_{i=1}^\dimension |x_i|^p}$.
The corresponding distance function is $d_p(x,y)=\|x-y\|_p$.
We extend this norm and distance to
$p=\infty$ in the usual way:  $\|x\|_\infty=\max_{i\in\{1,\ldots,\dimension\}}
|x_i|$ and $d_\infty(x,y)=\|x-y\|_\infty$.
The notion of $L_\infty$ norm is extended to functions in the following way:  given $f : A \to
\M$, we define $\|f\|_\infty = \max_{x\in A} \|f(x)\|$.
When clear from the context, we will omit the parameter $p$ and write simply 
$\|x\|$ and $d(x,y)$  for
$\|x\|_p$ and $d_p(x,y)$, respectively.

Let $S \subseteq \M$.  We denote by $S^c$ the \emph{complement of} $S$, i.e.,
$S^c=\M \setminus S$.  The \emph{diameter of} $S$ is defined as  
\[\diam(S)= \max_{x,y \in S} d(x,y).\]  
For $\epsilon \in \R^+$, the $+\epsilon$-\emph{neighbor}  and the  $-\epsilon$-\emph{neighbor} of $S$ are defined as
\[S^{+\epsilon}=\{ x \;|\; \exists s \in S, d(x,s) \leq \epsilon \}   \qquad
S^{-\epsilon}=\{ x \;|\; \forall s \in \M, d(x,s) \leq \epsilon 
  \implies s \notin S \}=((S^c)^{+\epsilon})^c \] 
For $x \in \M$, the \emph{translations} of $S$ by $x$ and $-x$ are defined as
\[ S+x= \{y+x \;|\; y \in S \} \qquad S-x= \{y-x \;|\; y
\in S \}\]

\subsection{Measure theory}

We recall here some basic notions of measures theory that will be
used in this paper. 

\begin{definition}[$\sigma$-algebra and measurable space]
A $\sigma$-algebra $\mathcal{T}$ for a set $M$ is a nonempty set of
subsets of $M$ that is closed under complementation (wrt to $M$) and (potentially empty)
 enumerable union. The tuple $(M,\mathcal{T})$ is called 
a \emph{measurable space}. 
\end{definition}

 
\begin{definition}[Measure]
A {\em positive measure} $\mu$ on a measurable space $(M,\mathcal{T})$ is a
function $\mathcal{T} \to \R^+ \cup \{0\}$ such that
$\mu(\emptyset)=0$ and whenever $(S_i)$ is a enumerable family of
disjoint subset of $M$ then  
$\sum \mu(S_i) = \mu(\bigcup S_i)$.
A positive measure $\mu$ where $\mu(X)=1$ is called a \emph{probability measure}.
\end{definition} 

A tuple $(M,\mathcal{T},\Pb)$ where $(M,\mathcal{T})$ is a
measurable space and $\Pb$ a probability measure is called
\emph{probability space}.

In this paper we will make use of the Lebesgue measure $\lbg$  on $(\M,\sa)$ where $\sa$ is
the Lebesgue $\sigma$-algebra. The Lebesgue measure is the standard
way of assigning a measure to subsets of $\M$.                                                                           

\begin{definition}[Measurable function]
Let  $(M,\mathcal{T})$ and $(V,\Sigma)$ be two measurable spaces.  A
function $f : M \to V $ is measurable if $f^{-1}(v) \in \mathcal{T}$
for all $v \in \Sigma$.
\end{definition}

\begin{definition}[Absolutely continuous]
A measure $\nu$ is absolutely continuous according to a measure $\mu$, if for all $M \in \sa$,
$\mu(M)=0$ implies $\nu(M)=0$.  
\end{definition}
If a measure is absolutely continuous according to the Lebesgue measure then by the Radon-Nikodym theorem,
 we can express it as an integration of a density function $f$:
$\mu(M)=\int_M f(x) \, d\lbg$.


\subsection{Probability theory}

\begin{definition}[Random variable]
Let $(\Omega, \mathcal{F}, \Pb)$ be a probability space and $(E, \mathcal{E})$ a measurable space.
Then a random variable is a measurable function  $X\ :\ \Omega\ \to E$. 
We shall use the expression $\Pb\left[X\in B\right]$ to denote
$\Pb\left(X^{-1}(B)\right)$. 
\end{definition}

Let $f : \M \to \M$ be a measurable function and let $X\ :\ \Omega\ \to \M$ be a random variable.
In this paper, we will use the notation $f(X)$ to denote 
the random variable $Y\ :\ \Omega\ \to \M$ such that $f(X)(\omega)=f(X(\omega))$.
In particular, for $m \in \M$ we denote by $m+X$ the random variable $Y : \Omega\ \to \M$ such that $\omega \mapsto X(\omega) + m$.  

\begin{definition}[Density function]
Let $X\ :\ \Omega \to E$ be a random variable.
If there exists a function $f$ such that, for all $S \in \sa$,
$\Pb[X \in S]=\int_{S} f(u) \, du$, \; then $f$ is called the \emph{density function} of $X$.
\end{definition}

In this paper, we use the following general definition of the Laplace distribution (centered at zero).
\begin{definition}[Laplace distribution]\label{def:Laplace}
The density function $F$ of a Laplace distribution with scale parameter $b$ is
$F_b(x)=K(b) e^{-b\|x\|}$ where $K(b)$ is a normalization factor which is  determined by 
imposing $\int_{S} F_b(x)\, dx =1$.  
\end{definition}

\begin{definition}[Joint probability]
Let $(X,Y)$ be a pair of random variable on $\M$. The joint probability  on $(X,Y)$ is defined for all $I, J \in \sa$ as:
$\Pb[(X,Y) \in (I,J)] = \Pb[X \in I \AND Y \in J] $.
\end{definition}

\begin{definition}[Marginals]
Let $X$ and $Y$ :  $(\Omega, \mathcal{F}, \Pb) \to (\M,\sa)$.
The marginal probability of the random variable $(X,Y)$ for $X$ is defined as:
$\Pb[X\in B] = \int_{\M} \Pb[{(X, Y)} \in (B, \mathrm dy)]$.
\end{definition}

\section{Differential privacy in the exact semantics}\label{sec:diff-priv}
In this section, we recall the definition of differential privacy and  of the standard 
mechanisms to achieve it, and we discuss its correctness. 

\subsection{Differential privacy}

We denote by $\D$ the set of databases and we assume that the domain
of the answers of the query is $\M$ for some $n \geq 1$.  We denote by
$D_1 \sim D_2$ the fact that $D_1$ and $D_2$ differ by at most one
element. Namely, $D_2$ is obtained from $D_1$ by adding or removing
one element.

\begin{definition}[$\epsilon$-differential privacy]\label{def:dp}
A randomized mechanism $\mathcal{A} : \D \to \M$ is $\epsilon$-differentially private 
if for all databases $D_1$ and $D_2$ in $\D$ with $D_1 \sim D_2$, and all $S \in \sa$ (the Lebegue $\sigma$-algebra), we have :
\[ \Pb[\mathcal{A}(D_{1})\in S]\leq e^\epsilon \Pb[\mathcal{A}(D_{2})\in S]\]
\end{definition}

\begin{definition}[sensitivity]
The sensitivity $\Delta_f$ of a function $f : \D \to \M$ is
\[\Delta_f=\sup_{D_1,D_2 \in \D, D_1 \sim D_2} d(f(D_1),f(D_2)).\] 
\end{definition}

\subsection{Standard technique to implement differential privacy}
\label{ssect:standard}


The standard technique to grant differential privacy is to add  random noise to the true answer to the query.
In the following, we denote the query by  $f : \D \to \M$. This is usually a deterministic function.
We represent the noise as a random variable $X : \Omega \to \M$.
The standard mechanism, which we will denote by $ \mathcal{A}_0$, returns a probabilistic value which is the sum of the
true result and of a random variable $X$, namely:
\begin{mechanism}\label{mec:additif}
\[ \mathcal{A}_0(D)=f(D)+X \]
\end{mechanism}

\subsection{Error due to the implementation of the query}

The correctness of a mechanism $\mathcal{A}$, if we do not take the implementation
error into account, consists in $\mathcal{A}$ being
$\epsilon$-differentially private.  However, we are  interested
in  analyzing the correctness of the implemented mechanism. 
We start here by discussing the case in which, in mechanism~\ref{mec:additif}, the noise $X$ is 
exact but we take into account the approximation error in the implementation of $f$.

\paragraph{Notation} Given a function $g$, we will indicate by $g'$ 
its implementation, i.e. the function that, for any $x$, gives as result the  value 
actually computed for $g(x)$, with all the approximation and representation errors. \\

The first thing we observe is that 
the implementation of $f$ can give a sensitivity $\Delta_{f'}$ greater than  $\Delta_{f}$ and
we need to take that into account. 
In fact, in the exact semantics the correctness of the mechanism relies on the fact that
$d(f(D_1),f(D_2))\leq \Delta_f$. However, with rounding errors, we may have $d(f'(D_1),f'(D_2)) > \Delta_f$.
Hence we need to require the following  property, usually stronger than differential privacy.

\begin{hypo}\label{correct} Given a mechanism $\mathcal{A}(D) = f'(D)+X$, 
we say that $\mathcal{A}$ satisfies Condition~\ref{correct} with degree $\epsilon$ 
(the desired degree of differential privacy) if
the random variable $X$ has a probability distribution which is absolutely continuous according to the Lebesgue measure,  and  
\[\forall S \in \sa, r_1, r_2 \in \M, 
\Pb[r_1+X \in S]\leq e^{\epsilon \frac{d(r_1,r_2)}{\Delta_{f'}}} \Pb[r_2+X \in S]\]
\end{hypo}

\begin{remark}
In general we expect that  an analysis of the implementation of $f$ will provide some bound on the 
difference between $f$ and $f'$, and that will allow us to provide a bound on $\Delta_{f'}$ in terms of $\Delta_{f}$. 
For instance, if
$\|f-f'\|\leq \delta_f$ then we get $\Delta_{f'}\leq \Delta_{f}+ 2 \delta_f$. \\
\end{remark}

\begin{theorem}\label{laplace0}
Condition~\ref{correct} implies that the mechanism $\mathcal{A}(D) = f'(D)+X$ is $\epsilon$-differentially private (w.r.t. $f'$).
\end{theorem}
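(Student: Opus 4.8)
The plan is to unfold the definition of $\epsilon$-differential privacy (Definition~\ref{def:dp}) for the mechanism $\mathcal{A}(D) = f'(D) + X$ and show that the bound it requires is a direct consequence of Condition~\ref{correct}. First I would fix two neighboring databases $D_1 \sim D_2$ and an arbitrary measurable set $S \in \sa$, and set $r_1 = f'(D_1)$ and $r_2 = f'(D_2)$. Then $\mathcal{A}(D_1) = r_1 + X$ and $\mathcal{A}(D_2) = r_2 + X$, so that $\Pb[\mathcal{A}(D_1) \in S] = \Pb[r_1 + X \in S]$ and similarly for $D_2$. The goal reduces to showing $\Pb[r_1 + X \in S] \leq e^\epsilon \Pb[r_2 + X \in S]$.

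Next I would invoke Condition~\ref{correct}, which gives directly
\[ \Pb[r_1 + X \in S] \leq e^{\epsilon \frac{d(r_1,r_2)}{\Delta_{f'}}} \Pb[r_2 + X \in S]. \]
It therefore suffices to control the exponent, i.e. to show that $\frac{d(r_1,r_2)}{\Delta_{f'}} \leq 1$, so that the weaker bound $e^\epsilon$ dominates. This is where the definition of sensitivity enters: by the definition of $\Delta_{f'}$ as the supremum of $d(f'(D_1),f'(D_2))$ over all neighboring pairs, we have $d(r_1,r_2) = d(f'(D_1),f'(D_2)) \leq \Delta_{f'}$, hence $\frac{d(r_1,r_2)}{\Delta_{f'}} \leq 1$. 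Since the exponential is monotone increasing and $\Pb[r_2+X\in S]\geq 0$, the factor $e^{\epsilon d(r_1,r_2)/\Delta_{f'}}$ is bounded above by $e^\epsilon$, which yields the desired inequality.

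The argument is essentially a monotonicity-of-the-exponent observation, so there is no serious analytical obstacle; the only point requiring a little care is the degenerate case $\Delta_{f'} = 0$, where the ratio $d(r_1,r_2)/\Delta_{f'}$ is not defined. In that case $d(r_1,r_2) = 0$ forces $r_1 = r_2$, the two probabilities coincide, and the inequality holds trivially, so the statement is still valid. Assembling these pieces — rewriting the mechanism's probabilities as translations of $X$, applying Condition~\ref{correct}, and bounding the sensitivity ratio by one — completes the proof that $\mathcal{A}$ is $\epsilon$-differentially private with respect to $f'$.
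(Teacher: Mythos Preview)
Your proof is correct and follows essentially the same approach as the paper: fix neighboring databases, apply Condition~\ref{correct} to $r_1=f'(D_1)$ and $r_2=f'(D_2)$, and then bound the exponent by $\epsilon$ using $d(r_1,r_2)\leq\Delta_{f'}$. Your additional handling of the degenerate case $\Delta_{f'}=0$ is a minor refinement that the paper omits.
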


The following theorem shows that Condition~\ref{correct} is actually equivalent to differential privacy in the 
  case of    Laplacian noise.
\begin{theorem}\label{Laplace}
Let $\mathcal{A}(D) = f'(D)+X$ be a mechanism, and assume that $X$ is Laplacian. 
If $\mathcal{A}$ is  $\epsilon$-differentially private   (w.r.t. $f'$),  then Condition \ref{correct} holds.
\end{theorem}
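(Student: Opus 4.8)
The plan is to prove the converse of Theorem~\ref{laplace0} in the Laplacian case by a direct pointwise comparison of the two inequalities involved. We are given that the mechanism $\mathcal{A}(D)=f'(D)+X$ with Laplacian noise $X$ is $\epsilon$-differentially private (w.r.t. $f'$), meaning that for all neighboring databases and all measurable $S$ we have $\Pb[f'(D_1)+X \in S]\leq e^\epsilon \Pb[f'(D_2)+X \in S]$. We must recover the stronger, database-free statement of Condition~\ref{correct}, namely $\Pb[r_1+X\in S]\leq e^{\epsilon\,d(r_1,r_2)/\Delta_{f'}}\Pb[r_2+X\in S]$ for \emph{arbitrary} $r_1,r_2\in\M$ (not just realizable query outputs).

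First I would exploit the explicit density of the Laplacian. Since $X$ has density $F_b(x)=K(b)e^{-b\|x\|}$, the shifted variable $r+X$ has density $x\mapsto F_b(x-r)$, and the ratio of the two densities at any point $x$ is
\[
\frac{F_b(x-r_1)}{F_b(x-r_2)} = e^{-b(\|x-r_1\|-\|x-r_2\|)} \leq e^{b\|r_1-r_2\|} = e^{b\,d(r_1,r_2)},
\]
where the inequality is the reverse triangle inequality $\|x-r_2\|-\|x-r_1\|\leq\|r_1-r_2\|$. Integrating this pointwise density bound over any $S\in\sa$ immediately yields
\[
\Pb[r_1+X\in S]\leq e^{b\,d(r_1,r_2)}\,\Pb[r_2+X\in S].
\]
So the real content is not deriving \emph{some} exponential bound---that follows directly from the density---but rather pinning the scale parameter $b$ to exactly $\epsilon/\Delta_{f'}$, which is what differential privacy of $\mathcal{A}$ forces.

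To do that, I would run the density computation backwards. The hypothesis of $\epsilon$-differential privacy, applied with $r_1=f'(D_1)$ and $r_2=f'(D_2)$ over all neighboring pairs, combined with the definition of $\Delta_{f'}$ as the supremum of $d(f'(D_1),f'(D_2))$, constrains the Laplacian ratio: differential privacy must hold at distance $\Delta_{f'}$ (or values approaching it), forcing $e^{b\,\Delta_{f'}}\leq e^{\epsilon}$, hence $b\leq \epsilon/\Delta_{f'}$. The cleanest way to extract this is to test the differential privacy inequality on small sets $S$ shrinking to a point $x$ where the density ratio is maximized---i.e., on the ray opposite the shift direction---so that the integrated inequality degenerates to the pointwise one $e^{-b(\|x-r_1\|-\|x-r_2\|)}\leq e^\epsilon$ and, choosing $x$ collinear beyond $r_2$, to $e^{b\,d(r_1,r_2)}\leq e^\epsilon$. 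Substituting $b\leq\epsilon/\Delta_{f'}$ back into the forward density bound above gives $\Pb[r_1+X\in S]\leq e^{(\epsilon/\Delta_{f'})d(r_1,r_2)}\Pb[r_2+X\in S]$, which is exactly Condition~\ref{correct}.

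\textbf{The main obstacle} I anticipate is the direction of the argument that extracts the bound on $b$ rather than the one that uses it. The forward implication (Condition~\ref{correct} $\Rightarrow$ differential privacy, Theorem~\ref{laplace0}) is an integration of a pointwise inequality and is routine; the converse requires showing that differential privacy at the finitely many realizable output-distances is \emph{strong enough} to recover the bound for all $r_1,r_2$. This hinges on two points that need care: that the Laplacian density ratio attains its worst case $e^{b\,d(r_1,r_2)}$ along a specific geometric configuration (points collinear with and on the far side of the shift), and that one can localize the integral inequality $\Pb[r_1+X\in S]\leq e^\epsilon\Pb[r_2+X\in S]$ to this worst-case point by a Lebesgue-differentiation / shrinking-sets argument, since the raw differential privacy statement is only about measures of sets, not densities. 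Once the supremal ratio is identified and the scale parameter bound $b\Delta_{f'}\leq\epsilon$ is secured, the remainder is the same density-integration already carried out, so the crux is entirely in this localization and the extremal geometry of $\|\cdot\|$.
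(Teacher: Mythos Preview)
Your proposal is correct and follows essentially the same route as the paper's proof: first extract the bound $b\leq \epsilon/\Delta_{f'}$ on the Laplace scale parameter from $\epsilon$-differential privacy (by localizing the measure inequality at a point where the density ratio $e^{b\,d(r_1,r_2)}$ is attained), then use the triangle inequality on the Laplacian density to obtain $\Pb[r_1+X\in S]\leq e^{b\,d(r_1,r_2)}\Pb[r_2+X\in S]$ for arbitrary $r_1,r_2$ and substitute the bound on $b$. Your write-up is in fact more careful than the paper's about the localization step (the paper tacitly assumes a neighboring pair achieving $\Delta_{f'}$ and passes directly from the integrated inequality to a pointwise one, whereas you flag the Lebesgue-differentiation / shrinking-set argument explicitly).
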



%
%

\section{Error due to the  implementation of   the noise}\label{sec:errors}


In this section we consider the implementation error in the noise, trying to make as few assumptions as possible 
about the implementation of real numbers and  of the noise function.

We start with example which shows that any finite implementation makes it impossible for a mechanism 
to achieve the degree of privacy predicted by the theory (i.e. the degree of privacy it has in the exact semantics). 
This example is more general than the one in the introduction in the sense that 
it does not rely on any particular implementation of the real numbers, 
just on the (obvious) assumption that in a physical machine the representation of numbers
in memory is necessarily finite. On the other hand it is less ``dramatic'' than the one in the introduction, because it only shows that the 
theoretical degree of privacy degrades in the implementation, while the example in the introduction shows a case in which $\epsilon$-differential privacy does not hold (in the implementation) for any $\epsilon$. 
 
\begin{example}
{\rm
Consider the standard way to produce a random variable with a given probability law, such as the Laplace distribution.
Randomness on most computers is generated with integers. When we call a function that returns a uniform random value on 
the representation of reals, 
the function generates a random integer $z$ (with uniform law) between $0$ and $N$ (in practice $N \geq 2^{32}$) and returns $u=\nicefrac{z}{N}$. 
From this uniform random generator, we compute $n(\nicefrac{z}{N})$ where $n$ depends on the probability distribution we want to generate.
For instance, to generate the Laplace distribution 
we have $n(u)=- b\,\text{sgn}(u-1/2)\,\ln(1 - 2|u-\nicefrac{1}{2}|)$ which is the inverse of the cumulative function of the Laplace distribution.
However the computation of $n$ is performed in the finite precision semantics, 
i.e. $n$ is a function $\mathbb{F} \to \mathbb{F}$ where $\mathbb{F}$ is the finite set of the representable numbers.
In this setting, the probability of getting some value $x$ for our noise depends on the number of integers $z$ 
such that $n(\nicefrac{z}{N})=x$ :
if there are $k$ values for $z$ such that $n(\nicefrac{z}{N})=x$ then the probability of getting $x$ is $\nicefrac{k}{N}$.
This means, in particular, that, if the theoretical probability for a value $x$ is $1.5/N$, then the closest probability 
actually associated with the drawing of $x$ is either $\nicefrac{1}{N}$ or $\nicefrac{2}{N}$ and in both cases the error is at least $33\%$.
In figure~\ref{ex2}, we illustrate how the error on the distribution breaks the differential-privacy ratio that holds for
the theoretical distribution.
The ratio between the two theoretical Laplacian distributions is  $\nicefrac{4}{3}$.
However, since the actual distribution is issued from a discretization of the uniform generator, the resulting distribution is a step 
function. So when the theoretical probability is very low like in $x_0$, the discretization creates an artificial ratio of $2$ instead of  
$\nicefrac{4}{3}$.
}
\begin{figure}
\centering
  \newcommand*{\sensi}{1}
 \newcommand*{\scale}{0.4}
 \newcommand*{\step}{0.2}
 \newcommand*{\coef}{3}

\begin{tikzpicture}[domain=-6:5]\label{fig-intro}
    \draw[->] (-6.0,0) -- (5.2,0) node[right] {$x$};
    \draw[->] (0,-0.2) -- (0,3.2) node[above] {$P(x)$};
\node at (0,-0.5){$0$};
\foreach \x in {-30,...,25}{
\draw[thick] (\x *\step,{round(\coef*exp(-\scale*abs((\x +0.5)*\step))/\step)*\step}) -- (\x *\step +\step,{round(\coef*exp(-\scale *abs((\x + 0.5)*\step))/\step)*\step});
\draw[thick] (\x *\step,{round(\coef*exp(-\scale*abs((\x -0.5)*\step))/\step)*\step}) -- (\x * \step,{round(\coef*exp(-\scale *abs((\x + 0.5)*\step))/\step)*\step});

\draw[thick,green] (\x *\step,{round(\coef*exp(-\scale*abs((\x +0.5)*\step-\sensi))/\step)*\step}) -- (\x *\step +\step,{round(\coef*exp(-\scale * abs((\x + 0.5)*\step-\sensi))/\step)*\step});
\draw[thick,green] (\x *\step,{round(\coef*exp(-\scale*abs((\x -0.5)*\step-\sensi))/\step)*\step}) -- (\x * \step,{round(\coef*exp(- \scale * abs((\x + 0.5)*\step-\sensi))/\step)*\step});

}
\draw[-] (-5,-0.1) -- (-5,0.5);
\node at (-5,-0.5) {$x_0$};
\end{tikzpicture}

\caption{The probability distributions of Laplace noises generated from a discretized uniform generator}\label{ex2}
\end{figure}
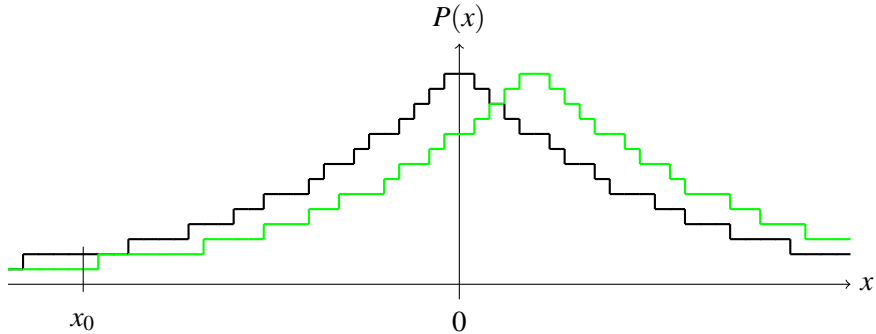
\end{example}


\subsection{The initial uniform random generator}
To generate a random variable, programing languages have only one primitive 
that generates a random value between $0$ and $1$ that aims to
be uniform and independent across several calls.
Hence, to get a random variable with a non uniform distribution, 
we generate it with a function that makes calls to this random generator.
For instance, to draw a value from a random variable $X$ on $\R$ 
distributed according to the cumulative function $C : \R \to ]0,1]$, it is sufficient to 
pick a value $u$ from the uniform generator in $]0,1]$ and then return $C^{-1}(u)$.

We identify three reasons why a uniform random generator may induce
errors.  The first has been explained in the introduction: finite
precision allows generating only $N$ different numbers such that when
we apply a function on the value picked some values are missing and
other are over represented.  The second reason comes from the
generator itself which can returns the $N$ values with different
probabilities even though we might assume that they are returned with
probability $1/N$: furthermore, some values may not even be returned
at all.  A third error is due to the dependence of returned results
when we pick several random values.  Indeed most of the generator
implementations are indeed pseudo generators: when a value is picked
the next one is generated as a hash function of the first one. This
means that if we have $N$ possibilities for one choice then we also have
$N$ possible pairs of successive random values.

To reason about implementation leakage, we have to take into account
all of these sources of errors.  We propose the following model. 
In the exact semantics, the uniform random variable $\U$ 
is generated from a cross product of $\dimU$ uniform independent variables $U$ (with $\dimU \geq \dimension$). 
We denote by $u_1, \dots, u_\dimU \in \Ud$ the values picked by our perfect random generator.
Then we consider the random variable $\U'$ actually provided as
generated from a function $n_0 : \R^\dimU \to \R^\dimU$, $(u'_1, \dots, u'_\dimU) =
n_0(u_1, \dots, u_\dimU)$.
We assume that the bias, i.e., the difference between $n_0$ and the identity, is bounded by 
some $\delta_0 \in \R^+$:
\begin{equation}\label{n0}
\| n_0 - \textrm{Id} \|_\infty \leq \delta_0.
\end{equation}

\subsection{The function $n$ for generating the noise}

From the value $u$ (resp. $u'$) drawn according to the distribution $\U$ in the exact semantics (resp. according to $\U'$ 
in the actual implementation), we generate another value 
by applying the functions $n$  and $n'$ respectively.
Let $X=n(\U)$ be the random variable with the exact distribution and $X'=n'(\U')$ the random variable with the actual one.
\begin{definition}\label{munu}
We denote by $\mu$ and $\nu$ the probability measure of $X$ and $X'$, respectively:
for all $S \in \sa$, $\mu(S)=\Pb[n(\U) \in S]=\lbg(n^{-1}(S))$
and  $\nu(S)=\Pb[n'(\U') \in S]=\lbg(n_0^{-1}(n^{-1}(S)))$. 
\end{definition}

In order to establish a bound on the difference between the probability distribution of $X$ and $X'$  we  
need some condition on the implementation $n'$ of $n$.
For this purpose we use the notion of closeness that we defined in \cite{gazeau12hal}.

\begin{definition}[($k,\delta$)-close,  \cite{gazeau12hal}]\label{closeness}
Let $A$ and $B$ be metric spaces with distance $d_A$ and $d_B$,
respectively. Let $n$ and $n'$ be two functions from $A$ to $B$ and let
$k,\delta \in \R^+$. We say that $n'$ is ($k,\delta$)-close to $n$ if 
\[\forall u,v \in A, d_B(n(u),n'(v)) \leq k\; d_A(u,v) + \delta.\]
\end{definition}

This condition is a combination of 
 the $k$-Lipschitz property, that states a bound between the error on the output and the error on the input, see below, and the implementation errors of $n$:
\begin{definition}[$k$-Lipschitz]
Let $(A,d_A)$ and  $(B,d_B)$ be two metrics spaces and $k \in \R$: 
A function $n : A \to B$ is $k$-Lipschitz if:
\[\forall u, v \in A, d_B(n(u),n(v)) \leq k d_A(u,v) \]
\end{definition}

In \cite{gazeau12hal}, we have proven the following relation between the properties of being $k$-Lipschitz
and of being ($k,\delta$)-close.

\begin{theorem}[\cite{gazeau12hal}]\label{kdelta}
If $n$ is $k$-Lipschitz and $\|n-n'\|_\infty \leq \delta$ then $n$ and $n'$ are ($k,\delta$)-close.
\end{theorem}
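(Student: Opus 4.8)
The plan is to derive the claim directly from the triangle inequality in the target metric space $B$, using the intermediate point $n(v)$. Fix arbitrary $u,v \in A$; the goal is to show $d_B(n(u),n'(v)) \leq k\, d_A(u,v) + \delta$. The natural decomposition is to compare $n'(v)$ with the true value $n(v)$ at the \emph{same} input, and separately to compare $n(u)$ with $n(v)$ across the two \emph{different} inputs. Concretely, I would write $d_B(n(u),n'(v)) \leq d_B(n(u),n(v)) + d_B(n(v),n'(v))$.

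For the first summand, the two arguments are both evaluated under $n$, so it is controlled purely by the $k$-Lipschitz hypothesis, giving $d_B(n(u),n(v)) \leq k\, d_A(u,v)$. For the second summand, I would unfold the assumption $\|n-n'\|_\infty \leq \delta$. By the extension of the $L_\infty$ norm to functions recalled in Section~\ref{sec:notation}, namely $\|f\|_\infty = \max_{x\in A}\|f(x)\|$, this quantity equals $\max_{x\in A} \|n(x)-n'(x)\| = \max_{x\in A} d_B(n(x),n'(x))$. Instantiating at $x=v$ then yields $d_B(n(v),n'(v)) \leq \delta$. Summing the two bounds produces exactly $d_B(n(u),n'(v)) \leq k\, d_A(u,v) + \delta$, and since $u,v$ were arbitrary, the $(k,\delta)$-closeness condition of Definition~\ref{closeness} holds.

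There is no genuine obstacle: the argument is a single application of the triangle inequality, and the only point worth flagging is the reinterpretation of the \emph{functional} norm $\|n-n'\|_\infty$ as a family of \emph{pointwise} distance bounds. This step tacitly uses that the distance $d_B$ arises from a norm, so that $\|n(x)-n'(x)\| = d_B(n(x),n'(x))$ for every $x$; this is the case for the $L_p$ distances on $\M$ used throughout the paper, which is the relevant setting for the applications, so the mild mismatch between the ``metric space'' phrasing of Definition~\ref{closeness} and the normed-space phrasing of the hypothesis causes no difficulty.
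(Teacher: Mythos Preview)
Your proof is correct: the triangle inequality with intermediate point $n(v)$, together with the $k$-Lipschitz bound on $d_B(n(u),n(v))$ and the pointwise bound $d_B(n(v),n'(v))\le\delta$ coming from $\|n-n'\|_\infty\le\delta$, gives exactly the required inequality. The paper does not actually supply a proof of this theorem---it is imported from \cite{gazeau12hal}---so there is nothing to compare against here; your argument is the standard one and matches what one would expect that reference to contain.
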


We strengthen now the relation by proving that (a sort of) the converse is also true.

\begin{theorem}\label{non-kdelta}
If there exist $u,v$, $d(n(u),n(v)) > k d(u,v) + 2 \delta$ 
then there exist no function $n'$ such that $n$ and $n'$ are ($k,\delta$)-close.
\end{theorem}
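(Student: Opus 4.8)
The plan is to argue by contradiction, exploiting the fact that the closeness inequality must hold for \emph{every} pair of arguments, in particular for degenerate pairs where the two arguments coincide. So suppose, for the sake of contradiction, that there does exist a function $n'$ that is $(k,\delta)$-close to $n$, while at the same time we are given a pair $u,v$ with $d_B(n(u),n(v)) > k\, d_A(u,v) + 2\delta$.

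The key move is to instantiate Definition~\ref{closeness} twice, at two carefully chosen pairs of points. First I would apply it to the off-diagonal pair, taking the first argument to be $u$ and the second to be $v$, which yields
\[ d_B(n(u),n'(v)) \leq k\, d_A(u,v) + \delta. \]
Then I would apply it to the diagonal pair where both arguments equal $v$; since $d_A(v,v)=0$, this collapses to the bound
\[ d_B(n(v),n'(v)) \leq \delta, \]
which says that $n'(v)$ can differ from the true value $n(v)$ by at most $\delta$.

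Combining these two facts through the triangle inequality in the metric space $B$ gives
\[ d_B(n(u),n(v)) \leq d_B(n(u),n'(v)) + d_B(n'(v),n(v)) \leq \bigl(k\, d_A(u,v) + \delta\bigr) + \delta = k\, d_A(u,v) + 2\delta, \]
which directly contradicts the hypothesis $d_B(n(u),n(v)) > k\, d_A(u,v) + 2\delta$. Hence no such $n'$ can exist.

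This proof is short and the only genuinely substantive idea is recognizing that the closeness condition, being quantified over all pairs, can be specialized to a diagonal pair to extract a pointwise bound $d_B(n(v),n'(v)) \leq \delta$ — effectively recovering an $\|n-n'\|_\infty$-type estimate from the relational definition. I do not expect any real obstacle; the main thing to be careful about is which of the two arguments of $n'$ is the one being bounded, so that the two applications of the definition share the point $n'(v)$ and the triangle inequality closes cleanly.
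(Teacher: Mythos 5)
Your proof is correct and is evidently the intended argument: the factor $2\delta$ in the hypothesis comes precisely from one application of the closeness definition at the diagonal pair $(v,v)$ (giving $d_B(n(v),n'(v))\leq\delta$) and one at the pair $(u,v)$, combined by the triangle inequality. The paper defers this proof to its full version, but your argument is complete and there is nothing to add.
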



Now, we would like $n$ and $n'$ to be ($k,\delta$)-close on $\M$.
However, this implies that $\mathcal{A}_0$ (mechanism~\ref{mec:additif}) cannot be $\epsilon$-differentially-private.
In fact, the latter would imply $\|n(\Ud)\|_\infty=\infty$ otherwise certain answers could be reported (with non-null probability) only 
in correspondence with certain true answers and not with others.
However, $\|n(\Ud)\|_\infty=\infty$ and $\Ud$ bounded implies 
there exist $u,v$, such that $d(n(u),n(v)) > k d(u,v) + 2 \delta$: 
we derive from theorem~\ref{non-kdelta} that $n'$ cannot exist.


In order to keep computed results in a range where we are able to bound the computational errors, one possible solution
consists of a truncation of the result.  
The traditional truncation works as follows: choose a subset $\Mr \subset \M$ 
and, whenever the reported answer $x$ is outside $\Mr$  return  the closest point to $x$ in $\Mr$.
However, while such a procedure is safe in the exact semantics because remapping does not alter  differential privacy, problems might appears when $n$ and $n'$ are not close. 
Furthermore, while in the uni-dimensional case there are two disjoints 
set that are mapped one on the minimal value and the other on the maximal value, 
in higher dimensions we  have a connected set that is mapped on several points, 
and on which the error is not bounded.

Therefore, to remain in a general framework where we do not have any additional knowledge about computational errors for large numbers,
we decide here to return an exception value when the result is outside of some 
compact subset $\Mr$ of $\M$.  
We denote by $\extr$ the value returned by the mechanism when $f'(D)+X' \notin \Mr$. 
Hence, the truncated mechanism $\mathcal{A}$ returns the randomized value or $\extr$:
\begin{mechanism}\label{mec:trunc}
\[\mathcal{A}(D)=\begin{cases}f'(D)+X' & \text{ if }f'(D)+X' \in \Mr \\ \extr & \text{otherwise}\end{cases} \]
\end{mechanism}

We truncate the result because we want to exclude non-robust computations from our mechanism.
However, such a procedure is effective only if unsafe computations remain outside the safe domain.
To grant this property we need two more conditions. 
 One requires the implementation to respect the monotonicity of the computed functions: 
\begin{hypo}\label{monotonicity} We say that a function $g : \M\rightarrow \R^k$ satisfies Condition~\ref{monotonicity} if,
for all $x,y \in \M$, $\|g(x)\| \leq \|g(y)\|$ implies $\|g'(x)\| \leq \|g'(y)\|$.
\end{hypo}
With this property, even if the implementation is not robust for large values, 
if we know some result is not in $\Mr$ then the result for any greater value is not in $\Mr$ either.

The other condition is about the closeness of the implementation of the noise and its exact semantics in a safe area.
For any $\delta_r \in \R^+$, we consider the set $\Ur  \subset \U$ 
defined as $\forall u\in \Ur \; \|n(u)\| \leq \diam(\Mr)+ \delta_r $ i.e. :
$\Ur = n^{-1} \left(\left\lbrace y \left| \|y\| \leq \diam(\Mr) + \delta_r \right. \right\rbrace \right)$.
\begin{hypo}\label{close}
We say that a  noise $n$ satisfies Condition~\ref{close} if $n$ and $n'$ are  ($k,\delta_n$)-close on a set $\Ur$ such that 
\[\forall u \in \Ur^c  \; f'(D) + n(u) \notin \Mr^{+k \delta_0 + \delta_n} \]
\end{hypo}

To find such a set $\Ur$, one possible way is by a   fix point construction.
We begin by finding the smallest $k_0$ and ${\delta_n}_0$ such that $n$ 
and $n'$ are ($k_0,{\delta_n}_0$)-close on $\Mr$.
Then for the generic step $m>0$, we compute the smallest $k_{m+1}$ and ${\delta_n}_{m+1}$ 
such that $n$ and $n'$ are ($k_{m+1},{\delta_n}_{m+1}$)-close on $\Mr^{k_m \delta_0 + {\delta_n}_m}$.


If Conditions~\ref{monotonicity} and \ref{close} hold, then from \eqref{n0} we  derive 
\begin{equation}\label{eq:extr}
\forall u \in \Ur^c  \; f'(D) + n'(n_0(u)) \notin \Mr
\end{equation}
So whatever happens outside of $\Ur$, the result will be truncated.
We can then consider that there is no implementation error outside $\Ur$.
Finally, we have a bound $\delta_t$ for the maximal shift between the exact and the actual semantics:
\begin{equation}\label{epst}
\delta_t =  k\delta_0+\delta_n
\end{equation}

\subsection{A distance between distributions}
Given that we are in a probabilistic setting,  the round-off errors  cannot be measured in terms of numerical difference as they can be in the 
deterministic case,  they should rather be measured in terms of distance between the theoretical distribution and the actual distribution. 
Hence, we need a notion of distance between distributions.
We choose to use the $\infty$-Wassertein distance \cite{champion08} which, as we will show, is the natural metric to measure our deviation.  

\begin{definition}[$\infty$-Wassertein distance]
Let $\mu$, $\nu$ two probability measures on $(\M,\sa)$ such that there exist a compact $\Omega$, $\mu(\Omega)=\nu(\Omega)=1$,
the $\infty$-Wassertein distance between $\mu$ and $\nu$ is defined as follows:
\[W_\infty(\mu,\nu)= \inf_{\gamma \in \Gamma(\mu,\nu)} 
\left(\inf_{ t \geq 0} \left(\gamma\left(\left\lbrace \left.(x, y) \in (\M)^2  \right|  d(x,y)> t   \right\rbrace\right)\right)=0\right)\]
Where $\Gamma(\mu,\nu)$ denotes the collection of all measure on $M\times M$ with marginals $\mu$ and $\nu$ respectively.
\end{definition}
If we denote by $\textrm{Supp}(x,y)$, the support where $\gamma(x,y)$ is non zero, we have
an equivalent definition \cite{champion08} for the $\infty$-Wassertein distance:
\[W_\infty(\mu,\nu)= \inf_{\gamma \in \Gamma(\mu,\nu)} 
\left( \sup_{\textrm{Supp}(x,y)} d(x,y) \right)\]

We extend this definition to any pair of measures that differ only on a compact ($\Mr$ in our case) 
by considering the subset of $\Gamma(\mu,\nu)$ containing only measure $\gamma(x,y)$ 
with $\gamma(x,y) = 0$ if $x\neq y$ and either $x\in \Mr^c$ or $y\in \Mr^c$.

We have introduced this measure because it has a direct link with the computational error as expressed by the following theorem.

\begin{theorem}\label{conv}
Let $X$ and $X'$ be two random variables with distribution $\mu$ and $\nu$ respectively.
We have that $\|X - X' \|_\infty \leq \delta$ implies $d(\mu,\nu)\leq \delta$. 
\end{theorem}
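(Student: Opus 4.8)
The plan is to exhibit one explicit coupling of $\mu$ and $\nu$ whose transport cost is at most $\delta$, and then let the infimum in the definition of $W_\infty$ do the rest. The natural candidate is the \emph{joint law} of the pair $(X,X')$, which is available precisely because the hypothesis $\|X-X'\|_\infty\le\delta$ only makes sense when $X$ and $X'$ live on a common probability space $(\Omega,\mathcal{F},\Pb)$. This is indeed the situation here, since $X=n(\U)$ and $X'=n'(\U')$ are both obtained from the same underlying draw. Note also that in the notation of the theorem $d(\mu,\nu)$ denotes the $\infty$-Wassertein distance $W_\infty(\mu,\nu)$.

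First I would set $\gamma$ to be the pushforward of $\Pb$ under the map $\omega\mapsto(X(\omega),X'(\omega))$, i.e. $\gamma(I\times J)=\Pb[X\in I \AND X'\in J]$ for all $I,J\in\sa$. I then verify that $\gamma\in\Gamma(\mu,\nu)$: by Definition~\ref{munu} and the definition of marginals, the first marginal sends $I$ to $\Pb[X\in I]=\mu(I)$ and the second sends $J$ to $\Pb[X'\in J]=\nu(J)$. Hence $\gamma$ is an admissible transport plan with the required marginals $\mu$ and $\nu$.

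The heart of the argument is to bound the cost of this $\gamma$. Consider the ``bad'' set $B=\{(x,y)\in(\M)^2 : d(x,y)>\delta\}$. Its preimage under $\omega\mapsto(X(\omega),X'(\omega))$ is exactly $\{\omega : d(X(\omega),X'(\omega))>\delta\}$, and since $\|X-X'\|_\infty\le\delta$ forces $d(X(\omega),X'(\omega))\le\delta$ for (almost) every $\omega$, this preimage is $\Pb$-null. Consequently $\gamma(B)=0$, so $\delta$ lies in the set $\{t\ge 0 : \gamma(\{d(x,y)>t\})=0\}$ occurring in the first form of the definition of $W_\infty$. Taking the infimum over $t$ shows this particular $\gamma$ has cost $\le\delta$, and taking the infimum over all couplings yields $W_\infty(\mu,\nu)=d(\mu,\nu)\le\delta$.

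I expect no genuine obstacle: the only points needing care are measure-theoretic bookkeeping, namely that the joint law is a well-defined coupling carrying the correct marginals and that the preimage of the (closed) set $B$ is measurable with null probability. Should one prefer the equivalent support formulation of $W_\infty$, the same $\gamma$ works, since the closed set $\{(x,y) : d(x,y)\le\delta\}$ contains the image of $(X,X')$ and hence contains $\textrm{Supp}(\gamma)$, giving $\sup_{\textrm{Supp}(x,y)} d(x,y)\le\delta$ directly.
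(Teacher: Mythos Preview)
Your proposal is correct and follows exactly the paper's approach: take $\gamma$ to be the joint law of $(X,X')$, check its marginals are $\mu$ and $\nu$, and use $\|X-X'\|_\infty\le\delta$ to conclude that $\gamma$ gives zero mass to pairs at distance greater than $\delta$. Your write-up is in fact more careful than the paper's on the measure-theoretic bookkeeping, but the argument is the same.
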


\begin{proof}
We consider the measure $\gamma$ on $M \times M$, $\forall A,B \in \sa, \gamma(A,B)=\Pb(X \in A \AND X' \in B) $.
The marginals of $\gamma$ are $\mu$ and $\nu$. 
Moreover, the support of $\gamma$ is $\delta$ since $\Pb(X \in A \AND X' \in B)=0$ when $A$ and $B$ are distant by more than $\delta$.
Since we have such a $\gamma$ the minimum on all the  $\gamma \in \Gamma(\mu,\nu)$ is less than $\delta$.
\end{proof}

In our case, according to \eqref{epst}, we have $d(\mu,\nu)\leq \delta_t$.
The following theorem allow us to bound the $\mu$ measure of some set with the measure $\nu$. 

\begin{theorem}\label{enc}
\[d(\mu,\nu)\leq \epsilon \implies \forall S \in \M, \nu(S^{-\epsilon}) \leq \mu(S) \leq \nu(S^\epsilon) \] 
\end{theorem}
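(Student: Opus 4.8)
The plan is to prove both inequalities from the coupling characterisation of the $\infty$-Wasserstein distance recalled just above. Write $W_\infty$ for $d(\mu,\nu)$, and note that in the theorem $S^\epsilon$ means the closed neighbour $S^{+\epsilon}$. Since both $\mu$ and $\nu$ are supported on the compact set $\Mr$, the infimum defining $W_\infty(\mu,\nu)$ is attained (this is the existence result of \cite{champion08}): there is a coupling $\gamma \in \Gamma(\mu,\nu)$ whose support lies in the closed set $D_\epsilon=\{(x,y)\in \M\times\M \mid d(x,y)\leq \epsilon\}$, i.e.\ $\gamma(D_\epsilon^{\,c})=0$. If one prefers to avoid attainment, the same conclusion can be reached by taking, for each $\epsilon'>\epsilon$, a coupling concentrated on $D_{\epsilon'}$ and letting $\epsilon'\downarrow\epsilon$, using continuity of $\nu$ from above along the decreasing family $S^{+\epsilon'}$; this route requires only minor care about the closedness of $S$, which is why the attainment argument is cleaner.

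First I would establish the right-hand inequality $\mu(S)\leq \nu(S^{+\epsilon})$. Because $\mu$ is the first marginal of $\gamma$ we have $\mu(S)=\gamma(S\times\M)$, and since $\gamma$ charges only $D_\epsilon$ this equals $\gamma((S\times\M)\cap D_\epsilon)$. The key geometric observation is the set inclusion $(S\times\M)\cap D_\epsilon\subseteq \M\times S^{+\epsilon}$: indeed, if $x\in S$ and $d(x,y)\leq\epsilon$, then $y$ has the witness $s=x\in S$ with $d(y,s)\leq\epsilon$, so $y\in S^{+\epsilon}$. Monotonicity of $\gamma$ then yields $\mu(S)\leq \gamma(\M\times S^{+\epsilon})=\nu(S^{+\epsilon})$, using that $\nu$ is the second marginal.

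The left-hand inequality $\nu(S^{-\epsilon})\leq \mu(S)$ I would obtain by duality rather than repeating the whole argument. Applying the inequality just proved to the complement $S^c$ gives $\mu(S^c)\leq \nu((S^c)^{+\epsilon})$. Rewriting both sides as complements and invoking the identity $S^{-\epsilon}=((S^c)^{+\epsilon})^c$ from the definition of the neighbour, this becomes $1-\mu(S)\leq 1-\nu(S^{-\epsilon})$, which rearranges to the claim. One may alternatively argue directly: if $(x,y)\in D_\epsilon$ and $y\in S^{-\epsilon}$, then the erosion property $\{z\mid d(z,y)\leq\epsilon\}\subseteq S$ forces $x\in S$, so $(\M\times S^{-\epsilon})\cap D_\epsilon\subseteq S\times\M$ and hence $\nu(S^{-\epsilon})\leq\mu(S)$.

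The only delicate point is the first paragraph, namely justifying that a single coupling realises $d(x,y)\leq\epsilon$ on $\gamma$-almost every pair (attainment of the $W_\infty$-infimum), together with the measurability of $D_\epsilon$, $S^{+\epsilon}$ and $S^{-\epsilon}$. Once such a $\gamma$ is in hand, both bounds are immediate from the two marginal identities and the two set inclusions, and no further estimation is required.
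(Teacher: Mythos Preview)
Your proof is correct and follows essentially the same route as the paper's: pick a coupling $\gamma$ supported on $D_\epsilon=\{(x,y):d(x,y)\le\epsilon\}$, use the marginal identities together with the set inclusion $(S\times\M)\cap D_\epsilon\subseteq \M\times S^{+\epsilon}$ to get one inequality, and pass to complements via $S^{-\epsilon}=((S^c)^{+\epsilon})^c$ for the other. The paper writes the same steps in integral form (starting from the second marginal rather than the first) and is terser; your version is a bit more careful in that you explicitly address attainment of the $W_\infty$-infimum and measurability, whereas the paper takes the existence of such a $\gamma$ for granted.
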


\begin{proof}
The property of marginals is $\nu(S) = \int_{\M \times S} d\gamma(x,y)$.
Since $\gamma(x,y)=0$ if $d(x,y)> \epsilon$, we derive $\nu(S) = \int_{S^\epsilon \times S} d\gamma(x,y)$.
Then we get $\nu(S) \leq \int_{S^\epsilon \times \M} d\gamma(x,y)$.
The last expression is the marginal of $\gamma$ in $S^\epsilon$, hence by definition of marginal:
$\nu(S) \leq \mu(S^\epsilon)$.
The other inequality is obtain by considering the complement set of $S$ ($\M \setminus S$). 
\end{proof}

\subsection{Rounding the answer}
Once the computation of $\mathcal{A}(D)$ is achieved, we cannot yet return  the answer, because
it could still leak some information.
Indeed, the distribution of $X$ and $X'$ are globally the same, but, on a very small scale, the
 distributions could differ a lot. We prevent this problem by  rounding the result:

\begin{mechanism}\label{mec:round}
The mechanism rounds the result by returning the value closest to $f(D)+n'$ in some discrete subset $S'$.
So $\mathcal{K}(D)=r(\mathcal{A}(D))$ where $r$ is the rounding function. 
\end{mechanism}

From the above rounding function we define the set $\sa'_0$ of all sets that have the same image under $r$. 
Then we define the $\sigma$-algebra $\sa'$ generated by $\sa'_0$: it is the closure under union of all these sets.
Observe now 
that it is not possible for the user to measure the probability that the answer belongs to a set which is not in $\sa'$.
Hence our differential privacy property becomes:  
\begin{equation}\label{s'}
\forall S  \in \sa', \Pb[\mathcal{A}(D_{1})\in S]\leq e^\epsilon \Pb[\mathcal{A}(D_{2})\in S]
\end{equation}

In this way we grant that any measurable set has a minimal measure 
and we prevent the inequality from being violated when probabilities are small.
The following value $R$  represents  the   robustness of the rounding.

\begin{equation}\label{ratio}
\ratio= \max_{S \in \sa'_0, S \neq \emptyset}\frac{\lbg(S^{+\delta_t} \setminus S^{-\delta_t})}{ \lbg(S^{-\delta_t})}
\end{equation}



Now, we are able to prove that if all conditions are met, then the implementation of the mechanism satisfies differential privacy.

\begin{theorem}\label{main}
A $\mathcal{K}$ mechanism that respects Conditions \ref{correct}--\ref{close} is $\epsilon'$-differentially private, with:

\[\forall S \in \sa, \Pb[\mathcal{A}'(D_1)\in S]\leq e^{\epsilon'} \Pb[\mathcal{A}'(D_2)\in S]\]
where $\epsilon' = \epsilon + \ln(1+\ratio e^{\epsilon \frac{\arrondi+\delta_t}{\Delta_{f'}}}) $,
 $\delta_t=k \delta_0 + \delta_n$ and $\arrondi=\max_{S \in \sa'_0} \diam S$.
\end{theorem}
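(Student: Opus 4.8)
The plan is to reduce the claimed inequality to a single‑cell estimate, and then to transport that estimate back and forth between the actual distribution $\nu$ and the exact distribution $\mu$, paying for the transport with the Wasserstein bound $d(\mu,\nu)\le\delta_t$ (which holds by Theorem~\ref{conv} together with \eqref{epst}) and for the change of database with Condition~\ref{correct}. First I would observe that, because $\mathcal{K}$ rounds its output into the discrete set $S'$, the output distribution is supported on the rounding points and is therefore determined by its values on the cells $C\in\sa'_0$. Since the differential‑privacy inequality is stable under disjoint unions (if $a_i\le e^{\epsilon'}b_i$ for all $i$ then $\sum_i a_i\le e^{\epsilon'}\sum_i b_i$), it suffices to prove, writing $r_1=f'(D_1)$ and $r_2=f'(D_2)$, that $\nu(C-r_1)\le e^{\epsilon'}\nu(C-r_2)$ for every cell $C\in\sa'_0$. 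Here $D_1\sim D_2$ gives $d(r_1,r_2)\le\Delta_{f'}$, so each use of Condition~\ref{correct} costs at most a factor $e^{\epsilon}$.

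The core of the argument is a transport chain. Using Theorem~\ref{enc} with parameter $\delta_t$ I would first push $\nu$ up to $\mu$, namely $\nu(C-r_1)\le\mu((C-r_1)^{+\delta_t})=\mu(C^{+\delta_t}-r_1)$, since $(S^{+\delta_t})^{-\delta_t}\supseteq S$ and neighbours commute with translation. I then split the fattened cell as $C^{+\delta_t}=C^{-\delta_t}\cup(C^{+\delta_t}\setminus C^{-\delta_t})$ and treat the two pieces separately. On the \emph{core} $C^{-\delta_t}$, Condition~\ref{correct} gives $\mu(C^{-\delta_t}-r_1)\le e^{\epsilon}\mu(C^{-\delta_t}-r_2)$, and a second application of Theorem~\ref{enc} transports $\mu$ back down to $\nu$, using $(C^{-\delta_t})^{+\delta_t}\subseteq C$, to yield $\mu(C^{-\delta_t}-r_2)\le\nu(C-r_2)$; hence the core contributes at most $e^{\epsilon}\nu(C-r_2)$.

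The interesting piece is the \emph{annulus} $A=(C^{+\delta_t}\setminus C^{-\delta_t})-r_2$, which I would compare with the core $B=(C^{-\delta_t})-r_2$. Here I would use that, by Condition~\ref{correct}, $\mu$ has a density $g$ (absolute continuity is part of the hypothesis) satisfying the multiplicative‑Lipschitz bound $g(y)\le e^{\epsilon d(y,y')/\Delta_{f'}}g(y')$, obtained by reading Condition~\ref{correct} pointwise over all translations. The key geometric observation is that every annulus point lies within $\delta_t$ of $C$ while every core point lies in $C$, so the distance between an annulus point and a core point is at most $\diam(C)+\delta_t\le\arrondi+\delta_t$; consequently $\operatorname{ess\,sup}_A g\le e^{\epsilon(\arrondi+\delta_t)/\Delta_{f'}}\operatorname{ess\,inf}_B g$. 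Combining this with $\mu(A)\le(\operatorname{ess\,sup}_A g)\,\lbg(A)$, with $\mu(B)\ge(\operatorname{ess\,inf}_B g)\,\lbg(B)$, and with the Lebesgue ratio $\lbg(A)/\lbg(B)=\lbg(C^{+\delta_t}\setminus C^{-\delta_t})/\lbg(C^{-\delta_t})\le\ratio$ from \eqref{ratio}, I obtain $\mu(A)\le\ratio\,e^{\epsilon(\arrondi+\delta_t)/\Delta_{f'}}\mu(B)$. Applying Condition~\ref{correct} once more to move the annulus from $r_1$ to $r_2$, and then $\mu(B)\le\nu(C-r_2)$ as above, the annulus contributes at most $e^{\epsilon}\ratio\,e^{\epsilon(\arrondi+\delta_t)/\Delta_{f'}}\nu(C-r_2)$. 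Adding the two contributions gives $\nu(C-r_1)\le e^{\epsilon}(1+\ratio\,e^{\epsilon(\arrondi+\delta_t)/\Delta_{f'}})\nu(C-r_2)=e^{\epsilon'}\nu(C-r_2)$, which is exactly the claim.

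I expect the main obstacle to be the annulus estimate, and specifically making the density comparison rigorous: one must justify passing from the integral inequality of Condition~\ref{correct} to a pointwise (essential) log‑Lipschitz bound on $g$, and then control the worst‑case density ratio by the \emph{geometric} distance bound $\arrondi+\delta_t$ rather than the cruder $\arrondi+2\delta_t$ one would get by bounding $\diam(C^{+\delta_t})$ directly. The truncation and Condition~\ref{monotonicity} play an auxiliary role: via \eqref{eq:extr} they guarantee that the exceptional value $\extr$ is returned consistently outside the safe region, so that Condition~\ref{close} (hence $d(\mu,\nu)\le\delta_t$) may legitimately be invoked on the whole range relevant to the cells inside $\Mr$.
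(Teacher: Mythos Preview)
Your proposal is correct and follows essentially the same architecture as the paper's proof: reduce to single rounding cells, transport $\nu\to\mu$ via Theorem~\ref{enc} at cost $\delta_t$, split the fattened cell into a core $C^{-\delta_t}$ and an annulus $C^{+\delta_t}\setminus C^{-\delta_t}$, handle the core by Condition~\ref{correct} and the annulus by a pointwise density comparison over distance $\leq\arrondi+\delta_t$ together with the Lebesgue ratio $\ratio$, and transport back. The only cosmetic difference is that the paper, instead of using $\operatorname{ess\,sup}/\operatorname{ess\,inf}$, selects a single point $x_0$ in the (shifted) core where the density is below its average $P_2/\lbg(C^{-\delta_t})$ and compares the annulus density directly to $p(x_0)$; the two devices yield the same bound, and the paper also treats the case $S=\{\extr\}$ explicitly by decomposing $\Mr^c$ into cells and reapplying the cell estimate.
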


\section{Application to the Laplacian noise in one dimension}\label{sec:one-dim}
In this section we illustrate how to use our result in the  case in which the domain of the answers is $\R$.
The noise added for the protocol, stated in the mechanism~\ref{mec:additif}, is the Laplacian centered in $0$ with scale parameter 
$\Delta_{f'} / \epsilon$.
Theorem \ref{Laplace} implies that Condition \ref{correct} holds for $\epsilon$.
We truncate the result outside of some interval $\Mr=[m,M]$.


\paragraph{Implementation of the $n$ function}
To generate a centered Laplacian distribution from a uniform random variable $U$ in $]0, 1]$, 
a standard method consists in using the inverse of the cumulative function, i.e. 
$X = n(U) = - b\,\text{sgn}(U-1/2)\,\ln(1 - 2|U-1/2|)$,
where $b$ is the intended scale parameter ($\frac{\Delta_{f'}}{\epsilon}$ in our case).
Hence our exact function $n$ is 
\begin{equation}\label{lap1}
 n(u)= \frac{\Delta_{f'}}{\epsilon} \text{sgn}(u-1/2) \ln(1 - 2|u-1/2|).
\end{equation}

\paragraph{Closeness of $n$ and $n'$}
In order to apply our theorem, we need to prove that Condition \ref{close} is satisfied.
By theorem \ref{kdelta}, it is sufficient to prove that, in the 
interval of interest, $n(u)$ is $k$-Lipschitz and that $|n(u)-n'(u)|\leq \delta_n$.
Note that the values of $\delta_n$ and $k$ in general depend on $n$ and on its implementation 
(often the logarithm is  implemented by the CORDIC algorithm).  


The logarithm function used by $n$ is not $k$-Lipschitz for any $k$. 
However, we are interested in the behavior of $n$ when $|n(u)| \leq M-m$.
From the definition of $n$ in \eqref{lap1}, we have:
\[{\frac{dn}{du}}(u) \leq \nmax = \frac{2 \Delta_{f'}}{\epsilon} e^{\frac{\epsilon \diam(\Mr)}{\Delta_{f'}}} \] 
in $\Ur=\{ u | n(u) \leq M-m\}$.
So our function $n$ is $\nmax$-Lipschitz.
Finally, our global error is 
\[\delta_t=\frac{2 \Delta_{f'}}{\epsilon} e^{\frac{\epsilon \diam(\Mr)}{\Delta_{f'}}}\delta_0 + \delta_n \]

\paragraph{Rounding the result}
The rounding process generates a $\sigma$-algebra $\sa'$ composed by small intervals of 
length $\arrondi$ where $\arrondi$ is the accuracy step of the rounding.
In that case, the value defined in (\ref{ratio}) is $\ratio=\frac{\arrondi+2\delta_t}{\arrondi-2\delta_t}$.

%
\paragraph{Differential privacy}
By Theorem \ref{main}, the implementation of our mechanism is $\epsilon'$-differentially private with
\[\epsilon' =  \epsilon + \ln(1+ \frac{\arrondi+2\delta_t}{\arrondi-2\delta_t} e^{\epsilon \frac{\arrondi+\delta_t}{\Delta_{f'}}})\]

\begin{remark}
In case our answer is not in $[m,M]$, we can return $-\infty$ or $+\infty$ instead of $\extr$. 
The reason is that even if the algorithm is not robust when $|u-0.5|$ is small the sign is still correct.
Then we can remap  $-\infty$ to $m$ and $+\infty$ to $M$ to get the usual truncation procedure.
\end{remark}

\section{Application to the Laplacian noise in $\R^2$}\label{sec:two-dim}
When the domain of the answers are the points of a map, like in the 
case of location-based applications,
it is natural to formalize it as the space $\R^2$ equipped with the Euclidean distance.

According to the protocol, we sanitize the results by adding a random variable $X$.
In this case, we will use for $X$ the bivariate Laplacian defined for the Euclidean metric \cite{Mig12} whose 
 density function is:
\[p(x,y)= K e^{b\sqrt{|x-x_0|^2+|y-y_0|^2}}\]
where $K$ is the normalization constant   and $b$ the scale parameter.
Since we are using a Laplacian noise, by Theorem \ref{Laplace},  Condition \ref{correct} holds.

\paragraph{Truncation}
Since most of the time the domain studied is bound 
(for instance the public transportation of a city is inside the limit of the city), 
we can do a truncation. 
However, we recall that our truncation is made for robustness purpose and not just for utility reasons.
Hence, if our domain of interest is a circle, we will not choose $\Mr$ to be the same circle because  
 the probability the truncation would return an exception  would be too high 
 (more than one half if the true result is on the circumference).

\paragraph{Implementation of the $n$ function}
Following \cite{Mig12}, we compute the random variable by drawing  an angle and a distance independently.
The angle $\theta$ is uniformly distributed in $[-\pi,\pi[$.
The radius $r$ has a probability density $D_{\epsilon,R}(r)=\epsilon^2re^{-\epsilon r}$ and  cumulative function 
$ C_\epsilon(r)=1-(1+\epsilon r) e^{-\epsilon r}$.
The radius can therefore be drawn by setting $r=C_\epsilon^{-1}(u)$ where $u$ is generated uniformly in $]0,1]$.

\paragraph{Robustness of $n$}
As in the previous section, we do not analyze an actual implementation but we care about the $k$ factor used for Condition \ref{close}.
First, we analyze for which $k_C(\epsilon,\diam(\Mr))$ the function $C_\epsilon^{-1}$ is $k$-Lipschitz in $[0,\diam(\Mr)]$.
Since $C$ is differential, this question is equivalent to find the inverse of the minimal value  taken by its derivative function on the interval
$C_\epsilon^{-1}([0,\diam(\Mr)])$.
By computing this minimum value, we get:
\[K_C(\epsilon,\diam(\Mr))= \frac{e^{\epsilon \diam(\Mr)}}{2\epsilon+ r \epsilon^2}\]
On the other hand, the computation of $\theta$ is just a multiplication by $2\pi$ of the uniform generator hence
$k_\theta=2\pi$.
Then, with the conversion  $(r,\theta) \mapsto (r \cos(\theta),r \sin(\theta))$ from polar coordinates to Cartesian coordinates we obtain
the global $k$ factor: 
\[k=\sqrt{K_C(\epsilon,\diam(\Mr))^2+2\pi\diam(\Mr)}\]
Let $\delta_n$ be the distance between $n$ and $n'$, and $\delta_0$ be 
the error of the uniform generator. From \eqref{epst} we get:
\[\delta_t=\sqrt{K_C(\epsilon,\diam(\Mr))^2+2\pi\diam(\Mr)} \delta_0 + \delta_n.\]

\paragraph{Rounding the answer}

%
%

%

We now compute the parameter $\ratio$ in \eqref{ratio}.
The rounding is made in the Cartesian coordinates, hence the inverse image of any returned value is a square $S$ of length $\arrondi$.
Note that $S^{\delta_t}$ is included in the square of length $\arrondi+2\delta_t$ and $S^{-\delta_t}$ is a square of length $\arrondi-2\delta_t$.
Hence the ratio value is smaller than $\ratio=(\frac{\arrondi+2\delta_t}{\arrondi-2\delta_t})^2$.
\paragraph{Differential privacy}
By Theorem \ref{main} we get that (the implementation of) our mechanism is $\epsilon'$-differentially private with 
\[\epsilon' =  \epsilon + \ln(1+ (\frac{\arrondi+2\delta_t}{\arrondi-2\delta_t})^2 e^{\epsilon \frac{\arrondi+\delta_t}{\Delta_{f'}}})\]

\section{Conclusion and future work}\label{sec:conclusion}
In this paper we have shown that, in any implementation of mechanisms for differential privacy, 
 the finite precision representation of numbers in any machine 
induces approximation errors that cause the loss of the privacy property.
To solve this problem, we have proposed a method based on   rounding the answer and raising
 an exception when the result is outside some values.
The main result of our paper is that the above method is sound in the sense that 
it preserves differential privacy at the price of a degradation of the privacy degree.
To prove this result, we needed to pay  special attention at expressing the problem 
in terms of probability theory and at defining the link
between computational error and distance between probability distributions.
Finally, we have shown how to apply our method to the case of the linear Laplacian and to that of 
bivariate Laplacian. 

As future developments of this work, we envisage two main lines of research: 
\begin{itemize}
\item Deepening the study of the implementation error in differential privacy:
there are several directions that seem interesting to pursue, including:
\begin{itemize}
\item Improving the mechanisms for generating basic random variables. 
For instance, when generating a one-dimensional random variable, it may have some advantage to pick more values from the uniform random generator, instead than just one (we recall that the standard  method is to draw one uniformly distributed value in $]0,1]$ and then  apply the inverse of the cumulative function). 
For instance, $u_1 + u_2$
has a density function with a triangular shape and cost only one addition. The other advantage is due to the finite representation: if the uniform random generator can pick  $N$ different values then two calls of it generate $N^2$ possibilities, which enlarge considerably the number of possibilities, and therefore reduce the ``holes'' in the distribution.
\item Considering more relaxed versions of differential privacy, for instance the $(\epsilon,\delta)$-differential privacy allows for a (small) additive shift $\delta$ between the two likelihoods in Definition~\ref{def:dp} and it is therefore more tolerant to the implementation error. It would be worth investigating for what values of $\delta$ 
(if any) the standard implementation of differential privacy is safe. 
\end{itemize}
\item Enlarging the scope of this study to the more general area of quantitative information flow. 
There are various notions of information leakage that have been considered in the computer security literature; 
the one considered in differential privacy is just one particular case. Without the pretense of being exhaustive, 
we mention the information-theoretic approaches based on Shannon entropy \cite{Clark:05:QAPL,Malacaria:07:POPL,Chatzikokolakis:08:IC} and those based on R\'enyi min-entropy \cite{Smith:09:FOSSACS,Braun:09:MFPS} and the more recent approach based on decision theory \cite{Alvim:12:CSF}. 
The main difference between differential privacy and these other notions of leakage is that in the former any 
violation of the bound in the likelihood ratio is considered catastrophic, while  the latter focuses on the 
average  amount of leakage, and it is therefore less sensitive to the individual violations. However, even though 
the problem of the implementation error may be attenuated in general by the averaging, we expect that there are cases in which it may still represent a serious problem. 
\end{itemize}

\bibliographystyle{eptcs}
\bibliography{bib,short}
\newpage
\appendix

\section{Proofs of  theorems \ref{laplace0} and \ref{Laplace}}
Theorem~\ref{laplace0}

\begin{proof}
Let $D_1$ and $D_2$ be two databases such that $D_1 \sim D_2$.  Let
$r_1=f'(D_1)$ and $r_2=f'(D_2)$ be two answers.
By definition of sensitivity, $d(r_1,r_2)\leq \Delta_{f'}$ so $e^{\epsilon \frac{d(r_1,r_2)}{\Delta_{f'}}} \leq e^\epsilon$.
Hence, \[ \Pb[\mathcal{A}(D_{1})\in S]\leq e^\epsilon \Pb[\mathcal{A}(D_{2})\in S]\]
\end{proof}

Theorem~\ref{Laplace}

\begin{proof}
First, we show that if $\mathcal{A}$ is $\epsilon$-differentially private 
then $b\leq \frac{\epsilon}{\Delta_{f'}}$ holds for the scale parameter $b$ of $X$. 
Let $D_1\sim D_2$ with $d(f'(D_1),f'(D_2))=\Delta_{f'}$. By $\epsilon$-differential privacy we have, for any 
$S \in \sa$:
\[ \Pb[f'(D_1)+X \in S]\leq e^\epsilon \Pb[f'(D_2)+X \in S]\]
From the density function of the Laplace noise (Definition~\ref{def:Laplace}), we derive:
\[ K(n,d)d\lbg \leq e^\epsilon K(n,d)e^{-b \Delta_{f'}}d\lbg\]
Hence,
\begin{equation}\label{blop}
b\leq \frac{\epsilon}{\Delta_{f'}}.
\end{equation}
Now, by definition of the density function, we have
\[\Pb[r_2+X \in S]= \int_{x \in  S} K(n,d) e^{-b d(x,r_2)} d\lbg\]
From the triangular inequality, we derive:
\[\Pb[r_2+X \in S]\geq \int_{x \in  S} K(n,d) e^{-b (d(x,r_1)+d(r_1,r_2))} d\lbg\]
Hence,
\[\Pb[r_2+X \in S] \geq e^{-b d(r_2,r_1)}\int_{x \in  S} e^{-b d(r_1,x)} d\lbg\]
From inequality \eqref{blop}, we derive:
\[\Pb[r_2+X \in S] \geq e^{-\frac{\epsilon d(r_2,r_1)}{ \Delta_{f'}}} \int_{x \in  S} e^{-b d(r_1,x)} d\lbg\]
Finally,
\[\Pb[r_2+X \in S] \geq  e^{-\frac{\epsilon d(r_2,r_1)}{ \Delta_{f'}}}\Pb[r_1+X \in S] \]
\end{proof}

\section{Proof of the main theorem \ref{main}}
\begin{proof}
Let $S$ in $\sa$. 
%
We first consider the case $S  \neq \extr$.

Define $P_1= \Pb[\mathcal{A}'(D_{1})\in S_a]$ and $P_2=\Pb[\mathcal{A}'(D_{2})\in S_a]$.
Since the result has been rounded (Definition \ref{mec:round}), 
it is equivalent to consider the set $S' \in \sa'$ with $S' = r^{-1}(S)$ instead of $S_a$.

Now we have $P_i=\Pb[f'(D_i) + n'(X) \in S']= \Pb[ n'(X) \in S'-f'(D_i)]$ where $i$ is $1$ or $2$.
Since $\nu$  is the measure associated to $n'$, we have
\[P_i=\nu( S'-f'(D_i))\]
%
%
From \eqref{epst} and Theorem \ref{conv}, $d(\nu,\mu)\leq \delta_t$.
From Theorem \ref{enc} we derive
\[P_1 \leq \mu(S^{\delta_t}-f'(D_1)) \qquad \mbox{and}\qquad P_2 \geq \mu(S^{-\delta_t}-f'(D_2)).\]
The additivity property of measures grants us $\mu(S^{\delta_t})=\mu(S^{-\delta_t})+\mu(S^{\delta_t}-S^{-\delta_t})$.
Condition \ref{correct} can be expressed in term of the measure as:
\[\forall S \in \sa, r \in \M \|r\|,
 \mu(S)\leq e^{\epsilon \frac{\|r\|}{\Delta_{f'}}} \mu(S-r)\]

From this inequality, we can derive, since $\|r\|=\Delta_{f'}$:
\[\mu(S^\epsilon)\leq e^{\epsilon} P_2 + \mu(S^{\delta_t} \setminus S^{-\delta_t})\]

Since  the probability  is absolutely continuous according to the Lebesgue measure (Condition \ref{correct}), 
we can express the probability with a density function $p$:
\[\forall S \in \sa, \mu(S) = \int_S p(x) d\lbg \]
We  derive:
\[ \forall S \in \sa, \min_{x \in S} p(x) \leq \frac{\mu(S)}{ \lbg(S)} \]
By applying this property on $S^{-\delta_t}-f'(D_2)$, we get:
\[\min_{x \in S^{-\delta_t}-f'(D_2)} p(x) \leq \frac{\mu(S^{-\delta_t}-f'(D_2))}{ \lbg(S^{-\delta_t}-f'(D_2))} \]
We   derive: 
\[\exists x_0 \in S-f'(D_2), p(x_0) \leq \frac{P_2}{\lbg(S)} \]

By the triangular inequality, we can bound the distance between $x_0$ and any point of $S^{\delta_t}$ by $\Delta_{f'} + \arrondi + \delta_t$
Hence,  from Condition \ref{correct} we derive:
\[\forall x \in S^{\delta_t}-f'(D_1), p(x) \leq e^{\epsilon\frac{\Delta_{f'}+\arrondi+\delta_t}{\Delta_{f'}}} p(x_0)\]
Then by integration:
\[\mu(S^{\delta_t}-f'(D_1)\setminus S^{-\delta_t}) 
\leq e^{\epsilon\frac{\Delta_{f'}+\arrondi+\delta_t}{\Delta_{f'}}} \frac{\lbg(S^{\delta_t} \setminus S^{-\delta_t})}{\lbg(S^{-\delta_t})}  P_2  \]
We apply the condition \ref{ratio}:
\[\mu(S^{\delta_t}-f'(D_1)\setminus S^{-\delta_t}) \leq e^{\epsilon\frac{\Delta_{f'}+\arrondi+\delta_t}{\Delta_{f'}}} \ratio  P_2 \] 
Finally we obtain :
\[P_1 \leq (1+\ratio e^{\epsilon \frac{\arrondi+\delta_t}{\Delta_{f'}}}) e^{\epsilon}P_2 \]

In case $S$ is $\extr$, due to \eqref{eq:extr},   $\Pb[\mathcal{A}'(D) = \extr]$ is 
the same as
 $\Pb[f'(D)+X' \in \Mr^c]$ where $d(\mu,\nu) \leq \delta_t$. 
Moreover, $\Mr^c$ can be decomposed in a enumerable disjoint union of element of $\sa_0$.
Therefore, the first part of the proof applies:   $\epsilon'$-differential privacy holds for all these elements. 
By the additivity of the measure of disjoint union we conclude.

\end{proof}
\end{document}